\documentclass[lettersize,journal]{IEEEtran}
\usepackage{amsmath,amsfonts}
\usepackage{algorithmic}
\usepackage{algorithm}
\usepackage{array}
\usepackage[caption=false,font=normalsize,labelfont=sf,textfont=sf]{subfig}
\usepackage{textcomp}
\usepackage{stfloats}
\usepackage{url}
\usepackage{verbatim}
\usepackage{graphicx}
\usepackage{cite}
\usepackage{hyperref}
\usepackage{amsthm}
\usepackage{fancyhdr}
\usepackage{enumerate}
\usepackage{tabularx}
 \usepackage{booktabs}
 \usepackage{balance}
\usepackage{float} 

\newtheorem{theorem}{Theorem}

\newcommand{\secref}[1]{Section~\ref{#1}}
\newcommand{\figref}[1]{Fig.~\ref{#1}}
\newcommand{\tabref}[1]{Table~\ref{#1}}
\newcommand{\thrmref}[1]{Theorem~\ref{#1}}

\newcommand{\specialcell}[2][c]{%
  \begin{tabular}[#1]{@{}c@{}}#2\end{tabular}}

\begin{document}
\title{Distributed Experimental Design: Bayes-optimal Fusion of Local Designs}
\author{Nagananda K G, Lav R. Varshney,~\IEEEmembership{Senior Member,~IEEE,} Pramod K. Varshney,~\IEEEmembership{Life Fellow,~IEEE}  
\thanks{Nagananda K G is with Fariborz Maseeh Department of Mathematics and Statistics, Portland State University, Portland, OR 97201, USA. (email: \texttt{nanda@pdx.edu}).}
\thanks{Lav R. Varshney is with the AI Innovation Institute, Stony Brook University, Stony Brook, NY 11794, USA, and with Brookhaven National Laboratory, Upton, NY 11973, USA. (email: \texttt{lav.varshney@stonybrook.edu}).}
\thanks{Pramod K. Varshney is with the Department of Electrical Engineering and Computer Science, Syracuse University, Syracuse, NY 13244 USA. (email: \texttt{varshney@syr.edu}).}
}



\maketitle

\begin{abstract}
We develop a decision-theoretic framework for distributed Bayesian experimental design in which local agents evaluate candidate experiments using expected information gain and transmit their local design decisions to a fusion center. Unlike centralized Bayesian design, where all likelihood components and information-gain values are available to a single planner, the fusion center in the distributed setting chooses a global experiment from compressed local recommendations.  We derive the Bayes-optimal fusion rule, which selects the experiment with largest conditional expected centralized information gain given the observed local design decisions.   This rule is analogous in spirit to optimal fusion rules in distributed detection, but differs fundamentally because the underlying utility is expected information gain and the resulting loss is information-gain regret rather than classification error.  We also establish information-loss bounds and identify conditions under which the decision-only fusion rule is asymptotically equivalent to the centralized design.  Numerical experiments show that Bayes-optimal fusion closely approximates the centralized oracle, whereas majority voting can be highly suboptimal when a minority of sites carry disproportionate information.  
\end{abstract}
\begin{IEEEkeywords}
distributed Bayesian experimental design; expected information gain; decision fusion; information-gain regret; decentralized inference.
\end{IEEEkeywords}

\section{Introduction}
Optimal experimental design is a critical catalyst for accelerating scientific discovery, as it ensures that every trial yields the maximum possible information while optimizing resource usage \cite{Musslick2025}. This strategic and systematic approach transforms experimentation from a trial-and-error process to a well-planned approach leading to  significant scientific progress and breakthroughs. Many modern scientific experimental investigations rely on data collected across multiple facilities, instruments, sensors, or computational agents. In such settings, experimental design is no longer a centralized problem as a central planner may not have direct access to all local data streams, local likelihood models, computational resources, or operational restrictions; see, for instance, \cite{Alexander2023, Lobel2025, Rillo2026}. This gives rise to a distributed experimental design problem where local agents must evaluate experimental alternatives using local information, while a fusion center must combine these local design recommendations into a global experimental design. The local recommendation that is best for one site, however, need not be best for the system as a whole. This gap between local and global scientific value is the main issue addressed in this paper and an optimal distributed experimental design solution is sought.

Distributed design of experiments is important for several reasons.  First, many contemporary experiments are physically distributed.  For example, multi-center biomedical studies, environmental monitoring systems, distributed quality-control systems, and decentralized scientific instruments generate information at separate locations \cite{Hart2006, Rundel2009, Deelman2009, Ram2012, Meinert2012, Piantadosi2017, Montgomery2020, Li2023, Roy2026}.  Second, communication constraints may prevent the full transmission of local likelihoods, data summaries, or posterior distributions \cite{McMahan2017, Jordan2019}.  A local agent may be able to send only a small message, such as choice of a preferred experiment or a low-dimensional statistic.  Third, privacy, proprietary restrictions, or regulatory constraints may prevent local sites from sharing raw data or detailed model information \cite{Agarwal2018, Chen2024}.  Fourth, in large-scale systems, centralized design may be computationally prohibitive, whereas local design can be performed in parallel \cite{Boyd2011, Le2021}.  These considerations motivate distributed procedures in which local experimental design computations are performed, and then the fusion center combines compressed design messages to select a common global experiment.

Bayesian experimental design is especially natural in this setting.  In Bayesian design, an experiment is chosen by comparing its expected utility under the prior or current posterior distribution \cite{Ryan2016}.  A common and principled utility is expected information gain (EIG)\textemdash measuring the expected reduction in uncertainty about an unknown parameter $\theta$ after the experiment is performed \cite{Chaloner1995}\textemdash which we consider.  If $\xi$ denotes a controllable experimental design, then Bayesian experimental design seeks a choice of $\xi$ that maximizes EIG about $\theta$.  The EIG criterion is directly tied to posterior learning rather than to a particular point estimator, hypothesis test, or loss function after data collection.

The distributed Bayesian experimental design problem differs fundamentally from the classical centralized formulation, where a designer has access to all relevant model components and chooses the experiment that maximizes centralized EIG \cite{Kiefer1959, Fedorov1972}.  In a distributed formulation, however, local agents may only compute local EIGs and transmit local design decisions.  The fusion center must then choose a global experiment from these local recommendations.  After the fusion center selects the global design, that design is communicated back to the sites, who implement the selected experiment rather than their purely local choices.  The key question is therefore not merely how to compute EIG, but how to fuse local design decisions so that the resulting global experiment is close to the centralized EIG optimum.

The main novelty of the paper is the fusion of local design decisions under an information-gain loss. This is distinct from simply aggregating local estimates, local posterior distributions, or local test decisions. The information transmitted by each local agent is a design recommendation.  The design choice made by the fusion center is a global experiment. The utility function is the loss in EIG caused by not selecting the optimum global experiment. Thus the problem lies at the intersection of Bayesian experimental design, distributed decision making, and information-theoretic utility maximization (see, for example, \cite{Williams2007})\textemdash but is not reducible to any of these standard formulations.

The main contributions of this paper are as follows. 
\begin{enumerate}[(1)]
\item We introduce distributed Bayesian experimental design as a decision-theoretic fusion problem.  At each local agent, the local EIG is used to select a local experiment. The fusion center receives the resulting local design decisions and must choose a global experiment. We define the centralized EIG, the local EIGs, the local design messages, and the fusion rule. This formulation makes explicit the information loss induced by communication constraints.  It separates sources of sub-optimality: the loss due to using local rather than centralized information and the loss due to compressing local information into design decisions.

\item In \thrmref{thrm:Bayes_optimal}, we characterize the Bayes-optimal fusion rule for the minimization of information-gain loss. The optimal fusion rule chooses the experiment with the largest posterior conditional expected centralized information gain given the local design messages. In other words, if the fusion center observes a vector of local design decisions, then the Bayes-optimal rule compares the conditional mean centralized information gains of the competing experiments given that message vector. This result is analogous in spirit to optimal fusion rules in distributed decision theory \cite{Varshney1996}, but the quantity being fused and the loss being minimized are different. 

\item In \thrmref{thrm:loss_asymptotic}, we establish information-loss bounds and asymptotic equivalence results. The information-gain regret of a fusion rule is defined as the difference between the centralized oracle EIG and the EIG attained by the fused design. This regret admits a natural decomposition into a conditional Bayes risk term and an approximation term. Under suitable separation and concentration conditions, the Bayes fusion rule based only on local design decisions becomes asymptotically equivalent to the centralized design. This result formalizes the intuition that, when the local design messages contain enough information about the global ordering of experiments, distributed decision fusion can recover the centralized design with vanishing information loss.
\end{enumerate}

We assume that all agents choose from a common finite experiment set, since the fusion center selects one system-wide experiment to be implemented by every agent.  The requirement of common treatment protocol is standard in multicenter clinical trials; see, for example, \cite{Meinert2014}.  Allowing different experiment sets or site-specific choices across agents would correspond to a different assignment problem with decision space $\{1, \ldots, \xi\}^M$ for $M$ local agents., whereas our focus is on fusing compressed local recommendations into one global design.  This restriction is appropriate when the experimental condition is a common sensing protocol, sampling frequency, treatment regime, stimulus, platform setting, or operating mode that must be deployed uniformly.

A useful way to view the theory is through the distinction between majority agreement and information agreement.  Majority agreement means that a majority of local agents prefer one experiment, whereas information agreement means that a smaller but more informative group favor the experiment with larger centralized EIG.  The two agreements need not coincide. Consequently, majority voting over local design recommendations can be suboptimal.  The Bayes-optimal fusion rule accounts for this by relating local decision patterns to centralized EIG.

This distinction is central to the paper. In many distributed systems, the most common local recommendation is not necessarily the globally most informative design. For instance, in heterogeneous sensor networks \cite{Krause2008, Garnett2010}, many low-precision sensors may favor one experiment, while a few high-precision sensors favor another. A majority-vote rule would select the first experiment, even if the second experiment produces greater centralized information gain. The proposed Bayesian fusion rule instead selects the experiment that maximizes conditional expected centralized information gain given the observed pattern of local recommendations. Thus the rule is sensitive not only to how many agents favor an experiment, but also to how much the pattern contributes to global information.

The present work is related to, but distinct from, the classical theory of distributed detection \cite{Tenney1981, Tsitsiklis1985, Reibman1987, Tsitsiklis1993, Viswanathan1997, Blum1997, Veeravalli2012}.  In distributed detection, local sensors observe data and transmit local detection decisions to a fusion center, whose task is to make a final hypothesis decision; the classical Chair--Varshney design gives an optimal fusion rule under conditional independence and known local sensor characteristics \cite{Chair1986}.  In our setting, the local messages are not decisions about the state of nature, but recommendations about which experiment to perform, and the fusion center selects a global design to maximize expected learning about an unknown parameter.  Thus, although both settings involve local messages and a fusion center, the optimality criteria are different: Chair--Varshney's fusion rule is driven by likelihood ratios for hypothesis testing, whereas our fusion rule is driven by conditional expected centralized EIG given the local design messages.  Consequently, the relevant loss is information-gain regret rather than classification error, and the resulting fusion problem is not a direct instance of distributed detection.  Distributed inference has also been addressed from an information-theoretic viewpoint; see, for example, \cite{Hoballah1989, Kreucher2007}.  While both these works and the present paper draw on information-theoretic concepts, they address fundamentally different problems.

We illustrate the theory using two examples. The first is a scalar Gaussian model with two candidate experiments, where $\theta \sim \mathcal{N}(0, \tau^2)$ and each local experiment has an experiment-specific precision contribution $r_{m,e}$.  In this case, local and centralized EIGs are available in closed form as $0.5\log(1+\tau^2 r_{m,e})$ and $0.5\log\{1+\tau^2\sum_{m=1}^M r_{m,e}\}$, respectively.  The second is a distributed binary-response model, where $Y_{m,e}\mid\theta$ is Bernoulli with logistic success probability. For this model, the design utility is computed from the prior-averaged Bernoulli Fisher information.  Together, these examples show that the proposed fusion principle applies both to Gaussian and non-Gaussian response settings, and that the experiment preferred by the majority of local agents need not be the experiment with the largest centralized EIG.

The computer experiments compare four rules: the centralized oracle, the Bayes fusion rule based only on local design decisions, majority voting, and random fusion.  The centralized oracle observes all local precision contributions and selects the experiment with the largest centralized EIG.  The Bayes fusion rule observes only the number or pattern of local design recommendations and chooses the experiment with the largest estimated conditional centralized EIG.  Majority voting chooses the experiment recommended by most local agents.  Random fusion ignores the local design recommendations and selects one of the candidate experiments at random with equal probability, serving as a baseline.

The numerical results show that Bayes fusion nearly matches the centralized oracle using only binary local recommendations, whereas majority voting can incur substantial information-gain regret under heterogeneous local informativeness. As $M$ increases, Bayes-fusion regret decreases, consistent with the asymptotic equivalence theory; majority voting need not improve when the numerical majority is associated with less informative sites.  These findings have two implications. (i) From a methodological standpoint, distributed Bayesian experimental design should be formulated as an information-gain fusion problem, not as a voting problem. The local recommendations are useful, but they must be interpreted through the information-gain criterion. (ii)  From a practical viewpoint, communication-efficient design is possible. A fusion center need not always receive all local likelihoods, posterior distributions, or numerical EIGs.  Under sufficient separation between the centralized EIGs of the competing experiments and sufficiently informative local design messages so that the global EIG ordering can be recovered with high probability, Bayes fusion of local design decisions achieves near-centralized performance.  

\secref{sec:system_model} formulates the distributed Bayesian design problem.   \secref{sec:main_results} derives the Bayes-optimal fusion rule under information-gain loss.  \secref{subsec:asymptotic_equivalence} establishes the information-loss bounds and asymptotic equivalence to the centralized design.  \secref{sec:illustrative_models} presents illustrative examples to reveal analytical structure underlying the theory.  \secref{sec:numerical_studies} presents numerical results.  \secref{sec:conclusion} concludes the paper. 

\section{System Model}\label{sec:system_model}
Let $\theta \in \Theta$ denote the unknown quantity of interest.  The prior density of $\theta$ is denoted $p(\theta)$.  Let $\mathcal{E} = \{1, \ldots, L\}$ be a finite set of candidate experiments.  There are $M$ distributed experimental sites.  At site $m = 1, \ldots, M$, the local EIG associated with experiment $\xi \in \mathcal{E}$ is
\begin{equation*}
I_m(\xi) = \mathbb{E}_{Y_{m, \xi}} \left[\mathrm{KL} \{p(\theta \mid Y_{m, \xi}, \xi)\,\|\, p(\theta)\}\right],
\end{equation*}
where $I_m(\xi)$ is the mutual information between $\theta$ and the local observation $Y_{m, \xi}$ induced by experiment $\xi$, $\mathrm{KL}$ denotes the Kullback--Leibler divergence, and the expectation is taken with respect to the prior predictive distribution of the local observation under experiment $\xi$.  The local design decision at site $m$ is then
\begin{equation*}
D_m = \arg\max_{\xi \in\mathcal{E}} I_m(\xi).
\end{equation*}
If all local model information were available centrally, the centralized EIG for experiment $\xi$ is $I_{\mathrm{cen}}(\xi) = \mathbb{E}_{Y_{\xi}}\left[\mathrm{KL}\{p(\theta \mid Y_{\xi}, \xi)\,\|\, p(\theta)\}\right]$, where $Y_{\xi}=(Y_{1, \xi}, \ldots, Y_{M, \xi})$ denotes the collection of all observations that would be obtained under experiment $\xi$. The centralized oracle chooses
\begin{equation*}
B = \arg\max_{\xi \in\mathcal{E}} I_{\mathrm{cen}}(\xi).
\end{equation*}
The fusion center, however, does not observe the full local likelihoods or the numerical values of the local EIGs. It observes only the local design messages $D = (D_1, \ldots, D_M)$, and must choose a global experiment through a fusion rule $\delta(D) \in \mathcal{E}$.  A natural utility function for the fusion rule is the loss in centralized EIG relative to the oracle:  $L\{\delta(D)\} = I_{\mathrm{cen}}(B)-I_{\mathrm{cen}}\{\delta(D)\}$.  Thus, the Bayes-optimal fusion rule is
\begin{equation*}
\delta^\ast(d) = \arg\max_{\xi \in \mathcal{E}} \mathbb{E}\{I_{\mathrm{cen}}(\xi) \mid D=d\}.
\end{equation*}
Note that, the $L$ experiments case formulated here captures the central challenge of distributed design fusion: the experiment preferred by the majority of local agents need not be the experiment with the largest centralized EIG.  

In the remainder of the paper, for brevity as well as for ease of elucidation we consider only the binary case $\mathcal{E} = \{0, 1\}$, using which we develop the optimal fusion rule, information-loss bounds, and asymptotic equivalence results.  At site $m$, one of two possible experiments may be conducted. The local design variable is $\xi_m \in \{0,1\}$, $m = 1, \ldots, M$.  After experiment $\xi_m$ is selected at site $m$, the site observes a random response $Y_m \in \mathcal{Y}_m$.  The conditional density of $Y_m$, given $\theta$ and $\xi_m$, is $p_m(y_m\mid \theta,\xi_m)$.  We assume conditional independence across sites given $\theta$ and the chosen designs. Thus, for a design vector $\xi=(\xi_1,\ldots,\xi_M) \in \{0, 1\}^M$, the joint conditional density of $Y = (Y_1,\ldots,Y_M)$
is
\begin{equation*}
p(y\mid \theta,\xi) = \prod_{m=1}^M p_m(y_m \mid \theta,\xi_m).
\end{equation*}
The corresponding marginal density of $Y$ under design $\xi$ is
\begin{equation*}
p(y\mid \xi) = \int_{\Theta}p(\theta)\prod_{m=1}^M p_m(y_m\mid \theta,\xi_m) d\theta.
\end{equation*}
The posterior density is
\begin{equation*}
p(\theta\mid y,\xi) = \frac{p(\theta)\prod_{m=1}^M p_m(y_m\mid \theta,\xi_m)}{\int_{\Theta}p(\vartheta)\prod_{m=1}^M p_m(y_m\mid \vartheta,\xi_m) d\vartheta}.
\end{equation*}

At the local site $m$, the EIG for choosing local experiment $\xi_m \in \{0, 1\}$ is $U_m(\xi_m) = I(\theta; Y_m \mid \xi_m) = H(\theta)\mathbb{E}_{Y_m\mid \xi_m}\left[H(\theta\mid Y_m,\xi_m)\right]$, where $I(\cdot; \cdot)$ and $H(\cdot)$ denote mutual information and entropy, respectively.  It can also be written as the expected KL divergence from prior to posterior:  $U_m(\xi_m) = \mathbb{E}_{Y_m\mid \xi_m}\left[D_{\mathrm{KL}}\left(p(\theta \mid Y_m, \xi_m) \Vert p(\theta)\right)\right]$.  This is the expected Bayesian surprise \cite{Varshney2013}.  The locally optimal design at site $m$ is, therefore,
\begin{equation*}
\xi_m^{\mathrm{loc}} = \arg\max_{\xi_m \in \{0, 1\}} U_m(\xi_m).
\end{equation*}
The local design decision is defined by
\begin{equation}
D_m = 
\begin{cases}
1, & U_m(1 ) >U_m(0),\\
0, & U_m(0) > U_m(1).
\end{cases}
\label{eq:D_m}
\end{equation}
For definiteness, ties are resolved by setting $D_m = 1$.  Thus, $D_m$ is a local recommendation about which experiment is more informative at site $m$.  Following \eqref{eq:D_m}, $D_m$ is a random variable through the randomness of the local information, local model characteristics, or site-specific design environment.  The realized value of $D_m$ is denoted $d_m$.  A fusion rule is written as $\delta(d_1, \ldots, d_M)$ when applied to observed messages.  Probabilities and conditional expectations are written in terms of the random vector $D = (D_1, \ldots, D_M)$.  In places where no ambiguity arises, we use $D_m$ and $d_m$ interchangeably to refer to the local design message.

If all models and all design choices are available at the central location, the EIG associated with the full design vector $\xi=(\xi_1,\ldots,\xi_M)$ is $U(\xi) = I(\theta;Y_1,\ldots,Y_M\mid \xi) = H(\theta) \mathbb{E}_{Y\mid \xi} \left[ H(\theta\mid Y,\xi) \right]$.  In KL form, $U(\xi) = \mathbb{E}_{Y\mid \xi} \left[D_{\mathrm{KL}}\left\{p(\theta\mid Y,\xi),\Vert,p(\theta)\right\}\right]$.  The centralized Bayesian design is
\begin{equation*}
\xi^{\mathrm{cent}} = \arg\max_{\xi \in \{0, 1\}^M} U(\xi).
\end{equation*}
This design would be the optimum if the fusion center had access to all local models and could optimize the entire vector of experimental choices jointly.

To obtain the system-wide binary design, we impose the restriction that the same experiment is run at every site. Thus the fusion center chooses a single system-wide design $a \in \{0, 1\}$, and then sets $\xi_1 = \cdots = \xi_M = a$.  Under this restriction, define $U(a) = I(\theta; Y_1, \ldots, Y_M\mid \xi_1 = \cdots = \xi_M = a)$.  The centralized system-wide design is
\begin{equation*}
a^\ast = \arg\max_{a \in \{0, 1\}} U(a)  = 
\begin{cases}
1, & U(1)>U(0),\\
0, & U(0)>U(1).
\end{cases}
\end{equation*}
The important restriction is that the fusion center selects one common system-wide experiment $a^\ast \in \{0, 1\}$, not a site-specific assignment of experiments.  

Each site sends a binary design recommendation $D_m \in \{0, 1\}$, and the fusion center applies a rule $\delta: \{0, 1\}^M \to \{0, 1\}$ to the message vector $D = (D_1, \ldots, D_M)$.  The selected experiment is implemented system-wide, so $\xi_1 = \cdots = \xi_M = \delta(D)$. Since the objective is centralized information gain, the loss from choosing $a$ when the centralized optimum is $b$ is $L(a,b) = U(b) - U(a)$, rather than a classification loss.  Thus the two possible error losses are $U(1) - U(0)$ and $U(0) - U(1)$, depending on which experiment is globally better.

\section{Main results}\label{sec:main_results}\label{sec:main_results}
We first characterize the Bayes-optimal fusion rule, showing how the fusion center selects a global experiment from the observed local design messages. We then study the performance of this rule, where we give an information-loss bound and establish conditions under which the distributed fusion rule is asymptotically equivalent to the centralized EIG design.

\subsection{Bayes-optimal fusion rule}\label{subsec:fusion_rule}
\begin{theorem}
Suppose $M$ distributed sites produce binary local design recommendations $D_1,\ldots,D_M$, where $D_m=1$ indicates that site $m$ recommends experiment $1$ and $D_m=0$ indicates that site $m$ recommends experiment $0$. Let $B \in \{0, 1\}$ denote the centralized information-optimal design. Assume that, conditional on $B$, the local recommendations are independent and satisfy
\begin{equation*}
\begin{split}
P(D_m=1\mid B=1)=p_m, \\
P(D_m=1\mid B=0)=q_m.
\end{split}
\end{equation*}
Let $L(a, b)$ be the information loss incurred by choosing design $a$ when design $b$ is centralized information-optimal. Then the Bayes-optimal fusion rule is $\delta^\ast(d) = 1$ if and only if
\begin{equation}
\begin{split}
\log\frac{\pi_1}{\pi_0}  &+ \sum_{m = 1}^{M} \left[d_m \log\frac{p_m}{q_m} + (1 - d_m)\log\frac{1 - p_m}{1 - q_m}
\right] \\ &> \log\frac{L(1,0)}{L(0,1)}.
\end{split}
\label{eq:Bayes_optimal}
\end{equation}
Otherwise, $\delta^\ast(d) = 0$.  Here $\pi_b = P(B = b)$, $b = 0, 1$.
\label{thrm:Bayes_optimal}
\end{theorem}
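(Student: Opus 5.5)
The plan is to treat this as a standard Bayes decision problem with two actions and a two-valued state $B$, and compute the posterior expected loss of each action given $D=d$. With $L(a,b)=U(b)-U(a)$, choosing $a=b$ costs nothing. So $L(0,0)=L(1,1)=0$, and the only nonzero losses are $L(1,0)$ (choosing $1$ when $0$ is optimal) and $L(0,1)$ (choosing $0$ when $1$ is optimal). The Bayes rule minimizes the conditional risk
\[
R(a\mid d)=\sum_{b\in\{0,1\}} L(a,b)\,P(B=b\mid D=d),
\]
so it is enough to compare $R(1\mid d)=L(1,0)P(B=0\mid d)$ with $R(0\mid d)=L(0,1)P(B=1\mid d)$. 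Minimizing this conditional risk pointwise in $d$ minimizes the overall Bayes risk $\mathbb{E}\,L(\delta(D),B)$, since that risk equals $\sum_d P(D=d)R(\delta(d)\mid d)$.

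The steps are as follows.
\begin{enumerate}[(i)]
\item Apply Bayes' rule: $P(B=b\mid D=d)\propto \pi_b\,P(D=d\mid B=b)$. The common normalizer $P(D=d)$ cancels in the comparison, assuming $P(D=d)>0$; otherwise the rule is arbitrary on a null set.
\item Use conditional independence given $B$ to factor
\[
P(D=d\mid B=1)=\prod_m p_m^{d_m}(1-p_m)^{1-d_m},\qquad P(D=d\mid B=0)=\prod_m q_m^{d_m}(1-q_m)^{1-d_m}.
\]
\item Write $\delta^\ast(d)=1$ iff $R(1\mid d)<R(0\mid d)$, i.e.
\[
\pi_1 P(d\mid B=1)\,L(0,1)>\pi_0 P(d\mid B=0)\,L(1,0).
\]
\item Take logarithms to get exactly \eqref{eq:Bayes_optimal}. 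Ties in the risk go to $0$, matching ``Otherwise''.
\end{enumerate}

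The main obstacle is interpreting $L(1,0)$ and $L(0,1)$ as constants. In the system model, $U(0)$ and $U(1)$ may themselves be random, since they depend on site characteristics. The conditional risk should then be $\mathbb{E}[L(a,B)\mid D=d]$, and this factors only if the loss magnitudes are deterministic, or at least conditionally independent of $D$ given $B$. I would state this explicitly: take $L(1,0)=\mathbb{E}[U(0)-U(1)\mid B=0]$ and $L(0,1)=\mathbb{E}[U(1)-U(0)\mid B=1]$, and assume the regret magnitude is conditionally independent of $D$ given $B$. Under that assumption the displayed comparison is exactly the comparison of $\mathbb{E}\{U(a)\mid D=d\}$ for $a=0,1$, which reconciles the result with the form $\delta^\ast(d)=\arg\max_\xi \mathbb{E}\{I_{\mathrm{cen}}(\xi)\mid D=d\}$ from Section~\ref{sec:system_model}.

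I also need $0<p_m,q_m<1$, $\pi_b>0$ and $L(\cdot,\cdot)>0$ so that the logarithms are finite. Degenerate cases can be handled by the usual conventions, for example allowing log-likelihood terms to take the values $\pm\infty$.
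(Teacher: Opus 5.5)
Your proposal is correct and follows the paper's proof essentially step for step. Both set up the conditional Bayes risk with $L(0,0)=L(1,1)=0$, apply Bayes' theorem, factor $P(D=d\mid B=b)$ using conditional independence, and take logarithms. Your extra remark is also sound: the paper treats $L(1,0)$ and $L(0,1)$ as fixed constants, and your suggestion to define them as conditional expectations given $B$, assuming the regret magnitude is conditionally independent of $D$, makes that implicit assumption explicit.
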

\begin{proof}
Define the globally optimal design state
\begin{equation*}
B =
\begin{cases}
1, & U(1)>U(0),\\
0, & U(0)>U(1),
\end{cases}
\end{equation*}
which is the centralized information-optimal design. The fusion center does not observe $B$ directly, but observes $D = (D_1,\ldots,D_M)$.  Let $\pi_b = P(B=b)$, $b \in \{0, 1\}$.  Assume that the local design recommendations are conditionally independent given $B$. That is,
\begin{equation*}
P(D=d \mid B = b) = \prod_{m=1}^M P(D_m=d_m\mid B=b),
\end{equation*}
where $d = (d_1, \ldots, d_M) \in \{0, 1\}^M$.  Write $p_m = P(D_m = 1\mid B = 1)$ and $q_m = P(D_m = 1\mid B = 0)$.  Then, $P(D_m=0\mid B=1)=1-p_m$ and $P(D_m = 0 \mid B = 0) = 1-q_m$.  For a realized vector $d=(d_1,\ldots,d_M)$, the posterior odds of $B = 1$ against $B = 0$ are
\begin{equation*}
\frac{P(B = 1 \mid D = d)}{P(B = 0 \mid D = d)} =  \frac{\pi_1}{\pi_0} \prod_{m = 1}^M \left(\frac{p_m}{q_m}\right)^{d_m} \left(\frac{1 - p_m}{1 - q_m}\right)^{1 - d_m}.
\end{equation*}

The Bayes risk of choosing design $a \in \{0, 1\}$ after observing $D = d$ is
\begin{equation*}
R(a \mid d) = \sum_{b=0}^1 L(a, b)P(B = b \mid D = d).
\end{equation*}
Since choosing the correct design incurs no information loss, $L(0, 0) = L(1, 1) = 0$.  Therefore,
\begin{eqnarray*}
R(1 \mid d) &=& L(1, 0) P(B = 0 \mid D = d), \\
R(0 \mid d) &=& L(0, 1)P(B = 1 \mid D = d).
\end{eqnarray*}
The Bayes-optimal fusion rule chooses $1$ whenever $R(1 \mid d) < R(0 \mid d)$.  That is, $L(1, 0)P(B = 0 \mid D = d) < L(0, 1)P(B = 1 \mid D = d)$.  Equivalently,
\begin{equation*}
\frac{P(B = 1 \mid D = d)}{P(B = 0 \mid D = d)} > \frac{L(1, 0)}{L(0, 1)}.
\end{equation*}
By Bayes' theorem, we have 
\begin{equation*}
\frac{P(B = 1 \mid D = d)}{P(B = 0\mid D = d)} = \frac{\pi_1 P(D = d\mid B = 1)}{\pi_0 P(D = d\mid B = 0)}.
\end{equation*}
Using conditional independence,
\begin{equation*}
\begin{split}
P(D=d\mid B=1) = \prod_{m=1}^M p_m^{d_m}(1-p_m)^{1-d_m}, \\
P(D=d\mid B=0) = \prod_{m=1}^M q_m^{d_m}(1-q_m)^{1-d_m}.
\end{split}
\end{equation*}
Hence, we can write 
\begin{equation*}
\frac{P(B=1\mid D=d)}{P(B=0\mid D=d)} = \frac{\pi_1}{\pi_0} \prod_{m=1}^M\left(\frac{p_m}{q_m}\right)^{d_m}\left(\frac{1-p_m}{1-q_m}\right)^{1-d_m}.
\end{equation*}
Substitution into the Bayes-risk inequality gives
\begin{equation*}
\frac{\pi_1}{\pi_0}\prod_{m=1}^M\left(\frac{p_m}{q_m}\right)^{d_m} \left(\frac{1-p_m}{1-q_m}\right)^{1-d_m} > \frac{L(1,0)}{L(0,1)}.
\end{equation*}
Taking logarithms gives the stated rule \eqref{eq:Bayes_optimal}, thereby completing the proof of \thrmref{thrm:Bayes_optimal}. 
\end{proof}
\thrmref{thrm:Bayes_optimal} has the same structural form as the Chair--Varshney decision-fusion rule \cite{Chair1986}, but the meaning is different.  In decision fusion, the fusion center combines local decisions to decide which hypothesis is true. The loss is usually based on false alarm and miss probabilities.  Here, the fusion center combines local recommendations to decide which experiment to run. The loss is not a detection loss, but the expected information loss caused by running the less informative experiment.  Thus the threshold $\log [L(1,0)/L(0,1)]$ has an experimental-design interpretation. It shifts the fusion rule toward the design whose incorrect rejection would cause the larger loss in EIG.  If $L(1,0)=L(0,1)$, then the threshold reduces to zero apart from the prior odds term, and the rule becomes a weighted likelihood-ratio fusion rule for the local design recommendations.  If $L(1,0)>L(0,1)$, then choosing design $1$ when design $0$ is optimal is more costly than choosing design $0$ when design $1$ is optimal. The fusion center therefore requires stronger evidence before selecting design $1$.  If $L(0,1)>L(1,0)$, then the fusion center is more willing to select design $1$, because failing to select it would cause the larger information loss.

\subsection{Information-loss bound and asymptotic equivalence}\label{subsec:asymptotic_equivalence}
Recall that the centralized information-optimal design is $B \in \{0, 1\}$, where $B = 1 \Longleftrightarrow U(1) > U(0)$ and $B = 0 \Longleftrightarrow U(0) > U(1)$, where $\Longleftrightarrow$ idenotes ``if and only if''.   The fusion center observes local design recommendations $D=(D_1,\ldots,D_M) \in \{0, 1\}^M$ and selects a fused design $\delta(D) \in \{0, 1\}$.  The information loss from choosing design $a$ when design $b$ is centralized information-optimal is $L(a,b) = U(b) - U(a)$.  Thus $L(0, 0) = L(1, 1) = 0$, while $L(0, 1) = U(1) - U(0) > 0$ and $L(1, 0) = U(0) - U(1) > 0$.

The Bayes-optimal fusion rule derived earlier chooses design $1$ when  \eqref{eq:Bayes_optimal} is satisfied.  For notational convenience, let 
\begin{equation*}
S_M(D) =  \sum_{m=1}^M\left[d_m\log\frac{p_m}{q_m} + (1 - d_m)\log\frac{1 - p_m}{1 - q_m} \right].
\end{equation*}
and let the threshold be
\begin{equation*}
\gamma =  \log\frac{L(1,0)}{L(0,1)}\log\frac{\pi_1}{\pi_0}.
\end{equation*}
Then the Bayes fusion rule is $\delta^\ast(D) = 1 \Longleftrightarrow S_M(D) > \gamma$.  Otherwise, $\delta^\ast(D) = 0$.  The following theorem gives a finite-sample information-loss bound and an asymptotic equivalence statement.

\begin{theorem}
Suppose that, conditional on $B$, the local design recommendations $(D_1, \ldots, D_M)$ are independent and satisfy $P(D_m=1\mid B=1)=p_m$ and $P(D_m=1\mid B=0)=q_m$, with $0 < q_m < p_m < 1$, $m = 1, \ldots, M$.  Let $\delta^\ast$ be the Bayes-optimal fusion rule.  Then, the fused distributed design is asymptotically equivalent to the centralized information-optimal design: $P\{\delta^\ast(D) = B\} \to 1$ and $\mathbb{E}\{L(\delta^\ast(D),B)\}\to 0$ exponentially fast as $M \to \infty$.
\label{thrm:loss_asymptotic}
\end{theorem}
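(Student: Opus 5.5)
The plan is to bound the error probability $P\{\delta^\ast(D)\neq B\}$ with a Chernoff (Bhattacharyya) argument applied to the log-likelihood-ratio statistic $S_M(D)$. I will then convert this into a bound on the expected information loss. Write $\ell_m(D_m)=D_m\log(p_m/q_m)+(1-D_m)\log\{(1-p_m)/(1-q_m)\}$, so that $S_M(D)=\sum_m \ell_m(D_m)$. By the assumption of \thrmref{thrm:loss_asymptotic}, the summands are independent given $B$.

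I read the threshold as $\gamma=\log\{L(1,0)/L(0,1)\}-\log(\pi_1/\pi_0)$, which is the form produced by \eqref{eq:Bayes_optimal}. The displayed formula has a missing minus sign.

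The hypothesis $0<q_m<p_m<1$ alone does not give an exponential rate, since $p_m-q_m$ could shrink with $m$. I will therefore make explicit a uniform separation condition: there is $\rho<1$ with $\rho_m:=\sqrt{p_mq_m}+\sqrt{(1-p_m)(1-q_m)}\le\rho$ for all $m$. For example, it suffices that $p_m-q_m\ge\eta>0$. I also need two boundedness conditions:
\begin{itemize}
\item $\gamma$ stays bounded in $M$, i.e.\ the prior odds and the loss ratio $L(1,0)/L(0,1)$ are bounded away from $0$ and $\infty$;
\item the losses are bounded, $L(a,b)\le \bar L<\infty$.
\end{itemize}
The loss bound holds, e.g., when $U(a)\le H(\theta)<\infty$, or more generally when $|U(1)-U(0)|$ is bounded. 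I will state these as the regularity conditions implicit in ``exponentially fast.''

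\textbf{Step 1 (two conditional error probabilities).} Markov's inequality with exponent $s=1/2$ gives
\begin{equation*}
P\{S_M>\gamma\mid B=0\}\le e^{-\gamma/2}\prod_{m=1}^M \mathbb{E}\bigl[e^{\ell_m(D_m)/2}\mid B=0\bigr]=e^{-\gamma/2}\prod_{m=1}^M\rho_m .
\end{equation*}
Here $\mathbb{E}[e^{\ell_m/2}\mid B=0]=q_m\sqrt{p_m/q_m}+(1-q_m)\sqrt{(1-p_m)/(1-q_m)}=\rho_m$. Symmetrically, using $e^{-S_M/2}$ under $B=1$,
\begin{equation*}
P\{S_M\le\gamma\mid B=1\}\le e^{\gamma/2}\prod_{m=1}^M\rho_m .
\end{equation*}
Since $\rho_m<1$ is the Bhattacharyya coefficient of two distinct Bernoulli laws, both bounds are at most a constant times $\rho^M$.

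\textbf{Step 2 (error probability).} Combining the two cases,
\begin{equation*}
P\{\delta^\ast(D)\neq B\}\le(\pi_0e^{-\gamma/2}+\pi_1e^{\gamma/2})\,\rho^M .
\end{equation*}
This tends to $0$ exponentially, so $P\{\delta^\ast(D)=B\}\to1$.

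\textbf{Step 3 (expected loss).} Because $L(b,b)=0$,
\begin{equation*}
\mathbb{E}\{L(\delta^\ast(D),B)\}=\pi_0L(1,0)P\{S_M>\gamma\mid B=0\}+\pi_1L(0,1)P\{S_M\le\gamma\mid B=1\}\le \bar L\,C\rho^M ,
\end{equation*}
with $C=\pi_0e^{-\gamma/2}+\pi_1e^{\gamma/2}$. This is the finite-sample information-loss bound, and it decays exponentially.

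An alternative route would be to bypass Chernoff and simply note that $\delta^\ast$ is Bayes-optimal. Then it suffices to exhibit any rule with exponentially small error, such as a suitably weighted vote analysed by Hoeffding's inequality, and compare Bayes risks.

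\textbf{Main obstacle.} The delicate part is not the Chernoff computation but justifying uniformity. One must ensure $\sup_m\rho_m<1$, and that $\gamma$ and the losses $L(1,0),L(0,1)$ do not drift with $M$. This matters because $U(\cdot)$ itself depends on $M$ through the centralized observation vector. If $U(1)-U(0)$ grew, I would need either a normalized loss or the bound $U(a)\le H(\theta)$. I would state these conditions explicitly as the ``separation and concentration'' assumptions referred to in the introduction.
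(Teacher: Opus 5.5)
Your proof is correct and uses the same machinery as the paper. Both condition on $B$, apply Markov's inequality to an exponential moment of $S_M$, factor over sites by conditional independence, and pass from error probability to expected loss. Your route differs in two ways.

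The first is the exponent. The paper keeps $s\in[0,1]$ free and obtains a rate only after specializing to $p_m\equiv p$, $q_m\equiv q$. Optimizing $s$ there gives the Chernoff information $C=-\log\inf_s c(s)$. Your fixed choice $s=1/2$ gives the Bhattacharyya exponent $-\log\rho=-\log c(1/2)\le C$, with equality when the minimizer is $s^\ast=1/2$ (e.g.\ $q=1-p$). So in the identically distributed case the paper's rate is sharper. In exchange, both of your tails share the symmetric coefficient $c_m(1/2)=\rho_m$, so no $s\leftrightarrow 1-s$ bookkeeping is needed. The paper's constant $K_\gamma$ in fact has $s^\ast$ and $1-s^\ast$ interchanged, which is harmless for the rate.

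The second, and more important, is your uniform condition $\sup_m\rho_m\le\rho<1$. It is genuinely necessary for the heterogeneous statement, and the paper's proof never supplies it. Take $p_m=\tfrac12+\epsilon_m$ and $q_m=\tfrac12-\epsilon_m$ with $\sum_m\epsilon_m^2<\infty$. Then $\rho_m=\sqrt{1-4\epsilon_m^2}$, so $\prod_m\rho_m$ stays bounded away from zero. Hence the total variation between the two product laws stays bounded away from one, and not even $P\{\delta^\ast(D)=B\}\to1$ holds. The theorem therefore needs a hypothesis like yours. Your argument, of which the identically distributed case is a special instance, is what establishes the result in the heterogeneous form the statement claims.

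Two minor points. First, with your (correct) sign for $\gamma$, you can keep the weights in Step 3 rather than bounding by $\bar L C$. The choice $s=1/2$ gives $\pi_0L(1,0)e^{-\gamma/2}=\pi_1L(0,1)e^{\gamma/2}=\sqrt{\pi_0\pi_1L(0,1)L(1,0)}$. So the expected loss is at most $2\sqrt{\pi_0\pi_1L(0,1)L(1,0)}\prod_m\rho_m$, and boundedness of $\gamma$ is needed only for the unweighted error probability.

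Second, the sufficient condition $U(a)\le H(\theta)<\infty$ is valid only for discrete $\theta$. In the paper's Gaussian model, $U(a)=\tfrac12\log(1+\tau^2J_a)$ diverges as $M$ grows. The operative assumption is therefore your alternative one, bounded $|U(1)-U(0)|$. This holds there whenever the per-site precisions lie in a fixed interval $[j_{\min},j_{\max}]$ with $j_{\min}>0$, since then $|U(1)-U(0)|\le\tfrac12\log(j_{\max}/j_{\min})$.
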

\begin{proof}
If the fused decision agrees with the centralized design, then $\delta^\ast(D)=B$, and therefore the realized information loss $L(\delta^\ast(D),B) = 0$.  If the fused decision disagrees with the centralized design, then $\delta^\ast(D)\neq B$, and the loss is either $L(0,1)$ or $L(1,0)$.  Define $L_{\max} = \max\{L(0,1),L(1,0)\}$.  Hence, for every realization of $D$ and $B$,
\begin{equation*}
L(\delta^\ast(D),B) \leq L_{\max}\mathbf{1}\{\delta^\ast(D)\neq B\},
\end{equation*}
where $\mathbf{1}\{\delta^\ast(D) \neq B\}$ is the indicator function of the event $\{\delta^\ast(D) \neq B\}$.  Taking expectations gives 
\begin{equation}
\mathbb{E}\{L(\delta^\ast(D),B)\} \leq L_{\max} P{\delta^\ast(D)\neq B}.  
\end{equation}
Conditioning the error probability on $B$ gives
\begin{equation*}
\begin{split}
P\{\delta^\ast(D)\neq B\} = \pi_1 P\{\delta^\ast(D) = 0 \mid B = 1\}  \\
+ \pi_0 P\{\delta^\ast(D) = 1 \mid B = 0\}.
\end{split}
\end{equation*}
Since $\delta^\ast(D) = 1 \Longleftrightarrow S_M(D) > \gamma$, we have $P\{\delta^\ast(D) = 0 \mid B = 1\} = P_1(S_M\leq \gamma)$ and $P\{\delta^\ast(D) = 1 \mid B = 0\} = P_0(S_M>\gamma)$.  Therefore,
\begin{equation}
P\{\delta^\ast(D) \neq B\} = \pi_1 P_1(S_M\leq \gamma) + \pi_0 P_0(S_M>\gamma),
\label{eq:exact_decomposition}
\end{equation}
where $P_b$ denotes probability conditional on $B=b$.  

We next bound the first probability term $P_1(S_M\leq \gamma)$ in \eqref{eq:exact_decomposition}.  Under $B=1$, the random variables $(D_1, \ldots, D_M)$ are independent Bernoulli random variables with success probabilities $(p_1, \ldots, p_M)$. Thus $P_1(D_m = 1) = p_m$, $P_1(D_m = 0) = 1 - p_m$.  Let
\begin{equation*}
Z_m = D_m\log\frac{p_m}{q_m} + (1 - D_m)\log\frac{1 - p_m}{1 - q_m}.
\end{equation*}
Then, $S_M = \sum_{m = 1}^M Z_m$.  For any $s\geq 0$, $P_1(S_M\leq \gamma) = P_1(e^{-sS_M}\geq e^{-s\gamma})$.  By Markov's inequality, $P_1(e^{-sS_M} \geq e^{-s\gamma}) \leq e^{s\gamma}\mathbb{E}_1(e^{-sS_M})$, where $\mathbb{E}_1$ denotes expectation under the probability measure $P_1$.  Therefore, $P_1(S_M\leq \gamma) \leq e^{s\gamma}\mathbb{E}_1(e^{-sS_M})$.  Since the $D_m$ are conditionally independent under $B=1$, we can write
\begin{equation*}
\mathbb{E}_1(e^{-sS_M}) = \prod_{m = 1}^M \mathbb{E}_1(e^{-sZ_m}),
\end{equation*}
where
\begin{equation*}
e^{-sZ_m} = \left(\frac{p_m}{q_m}\right)^{-sD_m}\left(\frac{1-p_m}{1-q_m}\right)^{-s(1-D_m)}.
\end{equation*}
Hence, we get 
\begin{equation*}
\begin{split}
\mathbb{E}_1(e^{-sZ_m}) &= p_m\left(\frac{p_m}{q_m}\right)^{-s} + (1 - p_m) \left(\frac{1 - p_m}{1 - q_m}\right)^{-s} \\
&= p_m^{1-s}q_m^s + (1-p_m)^{1-s}(1-q_m)^s.
\end{split}
\end{equation*}
Replacing $s$ by $1-s$ gives an equivalent parametrization on $[0,1]$: $\mathbb{E}_1(e^{-(1 - s)Z_m}) = p_m^s q_m^{1 - s} + (1 - p_m)^s(1 - q_m)^{1-s}$.  Thus, for $0\leq s\leq 1$,
\begin{equation}
P_1(S_M\leq \gamma) \leq e^{s\gamma} \prod_{m=1}^M c_m(s),
\label{eq:probability_bound1}
\end{equation}
where $c_m(s) = p_m^s q_m^{1-s} + (1-p_m)^s(1-q_m)^{1-s}$ for each $m$ with $0 \leq s \leq 1$ is the local Chernoff coefficient.  

A parallel argument bounds the second probability term in \eqref{eq:exact_decomposition}. Under $B=0$, $D_m$ is Bernoulli with success probability $q_m$. For $0\leq s\leq 1$, $P_0(S_M > \gamma) = P_0(e^{(1 - s)S_M} > e^{(1 - s)\gamma})$.  By Markov's inequality, $P_0(S_M > \gamma) \leq e^{-(1 - s)\gamma}\mathbb{E}_0(e^{(1 - s)S_M})$, where $\mathbb{E}_0$ is the expectation operator under the probability measure $P_0$.  Using conditional independence, we get
\begin{equation*}
\mathbb{E}_0(e^{(1 - s)S_M}) = \prod_{m=1}^M \mathbb{E}_0(e^{(1 - s)Z_m}),
\end{equation*}
where
\begin{equation*}
\begin{split}
\mathbb{E}_0(e^{(1 - s)Z_m}) &= q_m \left(\frac{p_m}{q_m}\right)^{1 - s} + (1 - q_m) \left(\frac{1 - p_m}{1 - q_m}\right)^{1 - s} \\
&= p_m^{1 - s}q_m^s + (1 - p_m)^{1 - s}(1 - q_m)^s.
 \end{split}
\end{equation*}
Equivalently, with the same Chernoff coefficient after reparameterization,
\begin{equation*}
\mathbb{E}_0(e^{sZ_m}) = p_m^s q_m^{1 - s} + (1 - p_m)^s(1 - q_m)^{1 - s}.
\end{equation*}
Thus,
\begin{equation}
P_0(S_M > \gamma) \leq e^{-(1 - s)\gamma} \prod_{m = 1}^M c_m(s).
\label{eq:probability_bound2}
\end{equation}
Combining the two bounds \eqref{eq:probability_bound1} and \eqref{eq:probability_bound2}, we get 
\begin{equation*}
\begin{split}
P\{\delta^\ast(D)\neq B\} &\leq \pi_1 e^{s\gamma} \prod_{m=1}^M c_m(s) + \pi_0 e^{-(1-s)\gamma}\prod_{m=1}^M c_m(s) \\
&= \left[\pi_1 e^{s\gamma} + \pi_0 e^{-(1-s)\gamma}\right]\prod_{m = 1}^M c_m(s).
\end{split}
\end{equation*}
Since the bound holds for every $s \in [0,1]$, we take the infimum over $s$ to get
\begin{equation*}
P\{\delta^\ast(D)\neq B\} \leq \inf_{0\leq s\leq 1} \left[\pi_1 e^{s\gamma} + \pi_0 e^{-(1-s)\gamma}\right]\prod_{m=1}^M c_m(s).
\end{equation*}
Consequently,
\begin{equation*}
\begin{split}
\mathbb{E}\{L(\delta^\ast(D),B)\} \leq &L_{\max}\inf_{0\leq s\leq 1}\left[\pi_1 e^{s\gamma} + \pi_0 e^{-(1-s)\gamma} \right]
\\
&\times \prod_{m=1}^M c_m(s).
\end{split}
\end{equation*}
Furthermore, in the identically distributed case, $p_m = p$,  $q_m = q$ for all $m$ with $0 < q < p < 1$.   Then, $c_m(s)=c(s)$, where $c(s) = p^s q^{1 - s} + (1 - p)^s(1 - q)^{1- s}$.   Since $p \neq q$, the two Bernoulli distributions are distinct.  Hence, their Chernoff information
\begin{equation*}
C = -\log \inf_{0\leq s \leq 1}c(s) > 0.
\end{equation*}
Therefore, $\prod_{m=1}^M c_m(s) = c(s)^M$.  Choosing $s = s^\ast$ that minimizes $c(s)$ gives $c(s^\ast)^M=e^{-MC}$.   For fixed $\gamma$, $\pi_0$, $\pi_1$, the pre-factor $\pi_1 e^{s^\ast\gamma} + \pi_0 e^{-(1 - s^\ast)\gamma}$ does not grow with $M$.  Hence, we get 
\begin{equation*}
P\{\delta^\ast(D)\neq B\} = O(e^{-MC}).
\end{equation*}
Since $\mathbb{E}\{L(\delta^\ast(D),B)\} \leq L_{\max}P\{\delta^\ast(D)\neq B\}$, we also have 
\begin{equation*}
\mathbb{E}\{L(\delta^\ast(D),B)\} = O(e^{-MC}),
\end{equation*}
where $O(e^{-x})$ denotes a quantity that is bounded in magnitude by a constant times $e^{-x}$ for sufficiently large $x$.  Clearly, $P\{\delta^\ast(D) = B\} \to 1$ and $\mathbb{E}\{L(\delta^\ast(D),B)\}\to 0$ exponentially fast as $M \to \infty$.  This completes the proof of \thrmref{thrm:loss_asymptotic}. 
\end{proof}
\thrmref{thrm:loss_asymptotic} says that if the local design recommendations are even weakly informative about the centralized design decision, then the fusion rule recovers the centralized design with probability tending to one.  The condition $p > q$ has a direct interpretation. It says that a local site is more likely to recommend experiment $1$ when experiment $1$ is globally optimal than when experiment $0$ is globally optimal.  The exponential rate is governed by the Chernoff information between the two Bernoulli laws, $\mathrm{Bernoulli}(p)$ and $\mathrm{Bernoulli}(q)$.  Thus the asymptotic result gives a precise statement: $\mathbb{E}\{L(\delta^\ast(D),B)\} \leq \text{constant}\times e^{-MC}$.  The constant depends on the prior odds and the EIG-loss asymmetry. The exponent depends on how reliably local design recommendations distinguish the globally optimal design.  This is a useful result because it separates two effects: information geometry of the experiment through $L(a,b)$ and reliability of local design recommendations through $C$. 

\section{Two Illustrative Examples}\label{sec:illustrative_models}  
We now illustrate the distributed design-fusion framework through two working models.  These models serve two complementary purposes.  The first, a scalar Gaussian model with two possible experiments, provides a fully explicit case in which the local EIGs, centralized EIG, design recommendations, and information-loss terms can all be computed in closed form.  This example clarifies the structure of the optimal fusion rule and shows that it is generally a reliability-weighted rule rather than a simple majority vote.  The second model treats a nonlinear binary-response experiment.  In that setting, exact EIG is typically not available in closed form, so we use a Fisher-information approximation to obtain an analytically tractable design criterion.  This second example shows that the proposed fusion framework is not tied to Gaussian conjugacy; once local design recommendations and their reliability characteristics are specified, the same information-loss fusion rule applies.  Together, the two models demonstrate both the exact analytical structure of the theory and its extension to more realistic nonlinear experimental settings.

\subsection{Scalar Gaussian model}\label{subsec:scalar_Gaussian}
Let $\theta \sim \mathcal{N}(\mu_0, \tau^2)$.  At site $m$, under design $a \in \{0, 1\}$, suppose $Y_m \mid \theta, a \sim \mathcal{N}(x_{ma}\theta, \sigma_{ma}^2)$, where $x_{ma}$ is the design sensitivity and $\sigma_{ma}^2$ is the measurement-noise variance. Assume conditional independence across sites given $\theta$ and $a$.  The Fisher information about $\theta$ contributed by site $m$ under design $a$ is $J_{ma} = x_{ma}^2/\sigma_{ma}^2$.  If the same system-wide design $a$ is used at all sites, then the total information is
\begin{equation*}
J_a = \sum_{m=1}^M J_{ma} = \sum_{m=1}^M \frac{x_{ma}^2}{\sigma_{ma}^2}.
\label{eq:total_information}
\end{equation*}
Since the model is Gaussian-conjugate, the posterior variance of $\theta$ after observing all sites under design $a$ is
\begin{equation*}
\tau_a^2 = \left(\frac{1}{\tau^2} + J_a\right)^{-1}.
\end{equation*}
The prior entropy is $H(\theta) = 0.5\log(2\pi e\tau^2)$ and the posterior entropy is $H(\theta \mid Y,a) = 0.5\log(2\pi e\tau_a^2)$.  Therefore, the centralized EIG is
\begin{equation*}
\begin{split}
U(a) &= H(\theta)-H(\theta\mid Y,a) = \frac{1}{2}\log\frac{\tau^2}{\tau_a^2} \\
&= \frac{1}{2}\log(1+\tau^2J_a) \\
&= \frac{1}{2}\log\left(1 + \tau^2\sum_{m=1}^M \frac{x_{ma}^2}{\sigma_{ma}^2}\right).
\end{split}
\end{equation*}
The centralized design is $B=1 \Longleftrightarrow U(1) > U(0)$.  Since the logarithm is increasing, $B = 1 \Longleftrightarrow J_1 > J_0$.  That is,
\begin{equation*}
B = 1 \Longleftrightarrow \sum_{m=1}^M \frac{x_{m1}^2}{\sigma_{m1}^2} > \sum_{m=1}^M \frac{x_{m0}^2}{\sigma_{m0}^2}.
\end{equation*}
At site $m$, the local EIG is
\begin{equation*}
U_m(a) = \frac{1}{2} \log\left(1 + \tau^2\frac{x_{ma}^2}{\sigma_{ma}^2}\right).
\end{equation*}
Thus the local design recommendation is $D_m = 1 \Longleftrightarrow U_m(1) > U_m(0)$, or equivalently, 
\begin{equation*}
D_m=1 \Longleftrightarrow \frac{x_{m1}^2}{\sigma_{m1}^2} > \frac{x_{m0}^2}{\sigma_{m0}^2}.
\end{equation*}
The information losses are explicit. If $B=1$, then
\begin{equation*}
L(0,1) = U(1) - U(0) = \frac{1}{2}\log\left(\frac{1+\tau^2J_1}{1+\tau^2J_0}\right).
\end{equation*}
If $B=0$, then
\begin{equation*}
L(1,0) = U(0) - U(1) = \frac{1}{2}\log\left(\frac{1+\tau^2J_0}{1+\tau^2J_1}\right).
\end{equation*}
Therefore, in this Gaussian model, every term in the fusion rule is explicit.  From \thrmref{thrm:Bayes_optimal}, the Bayes-optimal fused design is $\delta^\ast(D)=1$ if and only if
\begin{equation*}
\begin{split}
\log\frac{\pi_1}{\pi_0} + \sum_{m = 1}^M\left[D_m\log\frac{p_m}{q_m} + (1 - D_m)\log\frac{1 - p_m}{1 - q_m}\right] \\
> \log\frac{L(1,0)}{L(0,1)}.
\end{split}
\end{equation*}
The interpretation is simple.  Each site compares the two local signal-to-noise ratios $x_{m1}^2/\sigma_{m1}^2$ and $x_{m0}^2/\sigma_{m0}^2$.  The fusion center then aggregates the local recommendations, but not by majority vote in general. Each recommendation is weighted by its reliability through the log-likelihood ratios $\log(p_m/q_m)$ and $\log (1-p_m/1-q_m)$.  The threshold is shifted by the information loss ratio $L(1,0)/L(0,1)$.  Thus, even in the simplest Gaussian model, the optimal fusion rule is not merely choosing the experiment recommended by most sites. It is a reliability-weighted design-fusion rule with an information-loss threshold.

Suppose the local recommendations are conditionally independent and identically distributed, with $P(D_m=1 \mid B=1) = p$ and $P(D_m=1\mid B=0) = q$, where $0 < q < p < 1$.  Then the fusion statistic becomes
\begin{equation*}
S_M(D) = \sum_{m=1}^M\left[D_m\log\frac{p}{q} + (1 - D_m)\log\frac{1 - p}{1 - q}\right].
\end{equation*}
Let $N_1 = \sum_{m=1}^M D_m$.  Then
\begin{equation*}
S_M(D) = N_1\log\frac{p}{q} + (M - N_1)\log\frac{1 - p}{1 - q}.
\end{equation*}
The optimal fusion rule becomes a threshold rule in $N_1$: $\delta^\ast(D) = 1$ if and only if
\begin{equation*}
N_1 > \frac{\gamma - M\log\left(\frac{1-p}{1-q}\right)}{\log\frac{p}{q} -\log\left(\frac{1-p}{1-q}\right)}.
\end{equation*}
Thus the fused design is a weighted majority rule. It reduces to ordinary majority vote only in special symmetric cases.  The information-loss bound becomes
\begin{equation*}
\mathbb{E}\{L(\delta^\ast(D),B)\} \leq L_{\max}K_\gamma e^{-MC},
\end{equation*}
where
$K_\gamma = \pi_1 e^{s^\ast\gamma} + \pi_0 e^{-(1 - s^\ast)\gamma}$,
and
\begin{equation*}
C = -\log\left[\inf_{0\leq s\leq 1}\left\{p^s q^{1 - s} + (1 - p)^s(1 - q)^{1 - s}\right\}\right].
\end{equation*}
This gives an exact illustration of \thrmref{thrm:Bayes_optimal}.

\subsection{Distributed binary-response experiment}\label{subsec:binary_response}
The Gaussian model is analytically tractable, but many experimental-design problems involve binary outcomes.  A natural next model is a distributed binary-response experiment.  Let $\theta\in\mathbb{R}$ be an unknown scalar parameter with prior $\theta\sim N(\mu_0,\tau^2)$.  At site $m$, under design $a \in \{0, 1\}$, the response is binary: $Y_m \in \{0, 1\}$.  We assume the logistic model given by $P(Y_m = 1 \mid \theta, a) = \ell(\alpha_m + x_{ma}\theta)$, where $\ell(z) = e^z/(1+e^z)$.  Here, $\alpha_m$ is a site-specific baseline effect, and $x_{ma}$ is the design-dependent sensitivity at site $m$.  The conditional density is
\begin{equation*}
p_m(y_m \mid \theta, a) = \ell(\alpha_m + x_{ma}\theta)^{y_m}{1 - \ell(\alpha_m + x_{ma}\theta)}^{1 - y_m}.
\end{equation*}
The exact EIG is
\begin{equation}
U(a) = I(\theta;Y_1,\ldots,Y_M\mid a) = H(\theta) \mathbb{E}_{Y\mid a}\{H(\theta\mid Y,a)\}.
\label{eq:binary_response_EIG}
\end{equation}
Unlike the Gaussian model, \eqref{eq:binary_response_EIG} generally does not reduce to a closed elementary form, because the logistic likelihood is not conjugate to the Gaussian prior.  However, for a scalar parameter and moderately concentrated prior, a standard Fisher-information approximation gives a clean analytic design criterion.  The Fisher information at site $m$ under design $a$ is
\begin{equation*}
J_{ma}(\theta) = x_{ma}^2 \ell(\alpha_m + x_{ma}\theta){1 - \ell(\alpha_m + x_{ma}\theta)}.
\end{equation*}
The prior-averaged Fisher information is
\begin{equation*}
\begin{split}
\bar J_{ma} &= \mathbb{E}_{\theta}\left[x_{ma}^2\ell(\alpha_m + x_{ma}\theta){1 - \ell(\alpha_m + x_{ma}\theta)}\right] \\
&= x_{ma}^2\int_{\mathbb{R}}\ell(\alpha_m + x_{ma}\theta){1 - \ell(\alpha_m + x_{ma}\theta)}\phi(\theta;\mu_0,\tau^2) d\theta.
\end{split}
\end{equation*}
If the same design $a$ is used at every site, then the total prior-averaged Fisher information is $\bar J_a = \sum_{m=1}^M \bar J_{ma}$.  A Laplace approximation to the posterior entropy gives
\begin{equation*}
U(a) \approx \frac{1}{2} \log\left(1+\tau^2\bar J_a\right) = \frac{1}{2}\log\left(1 + \tau^2\sum_{m=1}^M \bar J_{ma}\right).
\end{equation*}
The approximate centralized information-optimal design is $B = 1 \Longleftrightarrow \bar J_1>\bar J_0$, where $\bar J_a = \sum_{m=1}^M \bar J_{ma}$.  At site $m$, the approximate local EIG is $U_m(a) \approx 0.5 \log\left(1 + \tau^2\bar J_{ma}\right)$.  Thus the local design recommendation is $D_m = 1 \Longleftrightarrow \bar J_{m1}>\bar J_{m0}$.  The approximate information loss is
\begin{equation*}
\begin{split}
L(0,1) &\approx \frac{1}{2} \log\left(\frac{1 + \tau^2\bar J_1}{1 + \tau^2\bar J_0}\right), \\
L(1,0) &\approx \frac{1}{2}\log\left(\frac{1 + \tau^2\bar J_0}{1 + \tau^2\bar J_1}\right).
\end{split}
\end{equation*}
\thrmref{thrm:Bayes_optimal} then applies with these EIG-induced losses.

If the prior variance $\tau^2$ is small, then the prior-averaged information can be expanded around $\mu_0$.  Let $\eta_{ma}=\alpha_m+x_{ma}\mu_0$.  Then
\begin{equation*}
\ell(\alpha_m + x_{ma}\theta){1 - \ell(\alpha_m + x_{ma}\theta)} \approx \ell(\eta_{ma}){1 - \ell(\eta_{ma})}.
\end{equation*}
Therefore, $\bar J_{ma} \approx x_{ma}^2\ell(\eta_{ma}){1 - \ell(\eta_{ma})}$.  The local design rule becomes $D_m = 1 \Longleftrightarrow x_{m1}^2 \ell(\eta_{m1}){1 - \ell(\eta_{m1})} > x_{m0}^2\ell(\eta_{m0}){1 - \ell(\eta_{m0})}$, which has a useful interpretation. The experiment is informative when two things happen: (i)  the design sensitivity $|x_{ma}|$ is large; and (ii) the Bernoulli response probability is not close to $0$ or $1$.  The factor $\ell(\eta_{ma}){1-\ell(\eta_{ma})}$ is maximized when $\ell(\eta_{ma}) = 1/2$.  Thus, in binary-response experiments, a design with a large sensitivity may still be poor if it pushes the response probability too close to $0$ or $1$.  The logistic model shows that the most informative design is not necessarily the one with the largest absolute design coefficient. It is the one that balances sensitivity with response variability.

The exact EIG \eqref{eq:binary_response_EIG} is not generally closed form, but the fusion theory remains valid once the local recommendations have been generated.  It only requires the conditional distribution of the local design recommendations given the globally optimal design state.  In the logistic model, suppose $P(D_m = 1 \mid B = 1) = p_m$ and $P(D_m = 1 \mid B = 0) = q_m$, $0 < q_m < p_m <1$.  Then the Bayes-optimal fusion rule is again $\delta^\ast(D) = 1$ if and only if
\begin{equation*}
\begin{split}
\log\frac{\pi_1}{\pi_0} + \sum_{m = 1}^M\left[D_m\log\frac{p_m}{q_m} + (1 - D_m)\log\frac{1 - p_m}{1 - q_m}\right] \\
> \log\frac{L(1,0)}{L(0,1)}.
\end{split}
\end{equation*}
Here $L(1,0)$ and $L(0,1)$ may be defined either using the exact EIG, $U(a)=I(\theta;Y_1,\ldots,Y_M\mid a)$, or using the Fisher-information approximation, $U(a) \approx \frac{1}{2}\log(1+\tau^2\bar J_a)$.  Under the homogeneous reliability assumption, $p_m = p$, $q_m = q$, $0 < q < p < 1$, the information-loss bound is $\mathbb{E}\{L(\delta^\ast(D),B)\} = O(e^{-MC})$, where
\begin{equation*}
C = -\log\inf_{0 \leq s \leq 1}\left[p^s q^{1 - s} + (1 - p)^s(1 - q)^{1 - s}\right].
\end{equation*}
Therefore, the fused distributed design is asymptotically equivalent to the centralized design even in the binary-response model, provided the local recommendations are informative.

The Gaussian model in \secref{subsec:scalar_Gaussian} gives an exact benchmark
\begin{equation*}
U(a) = \frac{1}{2}\log\left(1 + \tau^2\sum_{m = 1}^M\frac{x_{ma}^2}{\sigma_{ma}^2}\right),
\end{equation*}
with explicit local EIG, centralized EIG, information loss, and the fusion threshold.  The logistic model discussed in \secref{subsec:binary_response} gives a nonlinear example: $P(Y_m = 1 \mid \theta, a) = \ell(\alpha_m + x_{ma}\theta)$, where exact EIG is difficult, but Fisher-information EIG gives the tractable approximation
\begin{equation*}
U(a) \approx \frac{1}{2}\log\left(1 + \tau^2\sum_{m = 1}^M\bar J_{ma}\right).
\end{equation*}
This model shows that the proposed theory is not merely a Gaussian artifact. The fusion theorem is distribution-free at the level of local design decisions; only the computation of $U(a)$ and $L(a, b)$ changes from model to model.

\section{Numerical studies}\label{sec:numerical_studies}
This section presents numerical studies designed to illustrate the finite-sample (fixed $M$) and asymptotic (increasing $M$) behaviors of the proposed distributed Bayesian design framework.  We consider the scalar Gaussian and the binary-response experiment models treated in \secref{sec:illustrative_models}.  We compare the centralized oracle\textemdash which has access to all local model information\textemdash with the proposed Bayes fusion rule, majority voting, and random fusion. The experiments are constructed to highlight an important feature of distributed design: the experiment preferred by the majority of local agents need not be the experiment with the largest centralized EIG. In particular, heterogeneous local precisions allow a minority of highly informative sites to dominate the global information criterion. The numerical results presented herein therefore assess whether the Bayes fusion rule can recover near-centralized performance from compressed local messages, and whether simple voting can fail under realistic heterogeneity.

The following computational details apply to the Monte Carlo studies.  For the scalar Gaussian experiment, the fixed-$M$ study used $250{,}000$ training configurations to estimate the Bayes fusion rule and $120{,}000$ independent test configurations to evaluate performance. For the binary-response experiment, the corresponding numbers were $80{,}000$ and $40{,}000$. In the varying-$M$ studies, we used $160{,}000$ training and $80{,}000$ test configurations for the Gaussian model, and $50{,}000$ training and $25{,}000$ test configurations for the binary-response model, for each value of $M$.

The simulation is designed to represent a heterogeneous distributed system. In each Monte Carlo replication, sites are divided into two latent types. Type-0 sites tend to favor experiment 0, whereas type-1 sites tend to favor experiment 1. The important feature of the design is that the two types are not equally informative. In the numerical configuration used here, type-1 sites are less frequent but have larger average precision contributions. Consequently, it is possible for a numerical majority of local sites to recommend experiment 0 while the centralized information gain is larger for experiment 1. This is precisely the regime in which naive voting is expected to fail and in which an information-aware fusion rule is needed.

The local design decision transmitted by site $m$ is 
\begin{equation*}
D_m = \arg\max_{e \in \{0, 1\}} I_m(e).  
\end{equation*}
The fusion center observes only the vector of local decisions $(D_1,\ldots,D_M)$, or equivalently the number $K = \sum_{m = 1}^M D_m$ of local recommendations for experiment 1. It does not observe the local precision values $r_{m, e}$ or the local EIGs $I_m(e)$. The Bayes fusion rule is estimated from Monte Carlo training samples by computing, for each possible value of $K$, $\mathbb E\{I_{\mathrm{cen}}(0) \mid K = k\}$ and $\mathbb{E}\{I_{\mathrm{cen}}(1)\mid K=k\}$, and then selecting the experiment with the larger conditional mean centralized information gain. This is the empirical counterpart of the theoretical Bayes fusion rule derived in \secref{sec:main_results}. The rules that are compared are the centralized oracle, majority voting, and random fusion. The centralized oracle is not an implementable distributed rule; it is included as the benchmark against which information-gain regret is measured.

\begin{table}[htbp!]
\centering
\caption{Finite-sample performance in the scalar Gaussian experiment with $M=25$ local sites.}
\label{tab:gaussian_policy_summary}
\begin{tabular}{|l|r|r|r|r|}
\hline
Rule & \specialcell{Mean\\ EIG} & \specialcell{Mean\\ regret} & \specialcell{90\%\\ regret} & \specialcell{Oracle\\ agreement} \\
\hline
Centralized oracle & 2.326 & 0.000 & 0.000 & 1.000 \\ \hline
Bayes fusion       & 2.316 & 0.010 & 0.020 & 0.876 \\ \hline
Majority vote      & 2.160 & 0.166 & 0.381 & 0.263 \\ \hline
Random fusion      & 2.215 & 0.111 & 0.353 & 0.501 \\ 
\hline
\end{tabular}
\end{table}
\tabref{tab:gaussian_policy_summary} compares the four procedures for the scalar Gaussian experiment with $M=25$ local sites. The centralized oracle provides the unattainable benchmark because it observes all local precision contributions before selecting the global experiment. The Bayes fusion rule observes only the binary local design decisions, yet its mean EIG is $2.316$, compared with $2.326$ for the oracle. Thus the loss from compressing each local design calculation to a single binary recommendation is small when the messages are fused through the information-gain criterion. The mean regret of Bayes fusion is only $0.010$, and its 90th percentile regret is $0.020$, showing that the regret distribution is concentrated near zero.

The same table shows why majority voting is not an appropriate fusion principle for this problem. Majority voting has mean EIG $2.160$ and mean regret $0.166$, an order of magnitude larger than the Bayes-fusion regret. Its oracle agreement probability is only $0.263$, even though it uses the same local binary messages.  The poor performance is a consequence of heterogeneity. In the simulation design, the locally less frequent recommendation is often associated with larger precision contributions. Majority voting therefore answers the wrong question: it identifies the experiment preferred by more sites, not the experiment with larger centralized EIG. Random fusion is included only as a baseline. Its oracle agreement is near $0.5$, as expected, and its mean regret lies between Bayes fusion and majority voting in this particular heterogeneous configuration.

\begin{table}[htbp!]
\centering
\caption{Regret and oracle agreement as the number of local sites increases in the scalar Gaussian experiment.}
\label{tab:gaussian_asymptotic_summary}
\begin{tabular}{|r|r|r|r|r|}
\hline
$M$ & \specialcell{Bayes\\ regret} & \specialcell{Bayes\\ agreement} & \specialcell{Majority\\ regret} & \specialcell{Majority\\ agreement} \\
\hline
 3 & 0.029 & 0.843 & 0.077 & 0.709 \\ \hline
 5 & 0.020 & 0.867 & 0.099 & 0.626 \\ \hline
 7 & 0.024 & 0.835 & 0.116 & 0.551 \\ \hline
11 & 0.019 & 0.843 & 0.134 & 0.456 \\ \hline
15 & 0.015 & 0.852 & 0.147 & 0.384 \\ \hline
21 & 0.011 & 0.873 & 0.161 & 0.299 \\ \hline
31 & 0.008 & 0.883 & 0.171 & 0.217 \\ \hline
41 & 0.006 & 0.905 & 0.178 & 0.162 \\ \hline
61 & 0.003 & 0.933 & 0.184 & 0.096 \\ \hline
81 & 0.002 & 0.954 & 0.186 & 0.060 \\
\hline
\end{tabular}
\end{table}
\tabref{tab:gaussian_asymptotic_summary} studies the effect of increasing the number of local sites. The contrast between Bayes fusion and majority voting becomes sharper as $M$ grows. The Bayes-fusion regret decreases from $0.029$ at $M=3$ to $0.002$ at $M=81$, while its probability of matching the centralized oracle increases from $0.843$ to $0.954$. This is the finite-sample numerical signature of the asymptotic equivalence result: although the fusion center observes only local binary design decisions, the Bayes rule extracts enough information from the decision pattern to recover the centralized information-gain ordering with increasing accuracy.

Majority voting shows the opposite behavior. Its regret increases from $0.077$ at $M=3$ to $0.186$ at $M=81$, and its oracle agreement probability decreases from $0.709$ to $0.060$.  The data-generating mechanism creates a heterogeneous system in which the more numerous class of sites is not the more informative class. As $M$ increases, the majority rule becomes increasingly confident in the recommendation of the larger but less informative group. Therefore, adding more sites amplifies the structural bias of majority voting. The table demonstrates that distributed design cannot be judged by the number of agreeing local agents alone; the relevant quantity is the centralized information gain induced by the fused experimental decision.
\begin{figure}[htbp!]
\centering
\includegraphics[height=1.75in,width=2.5in]{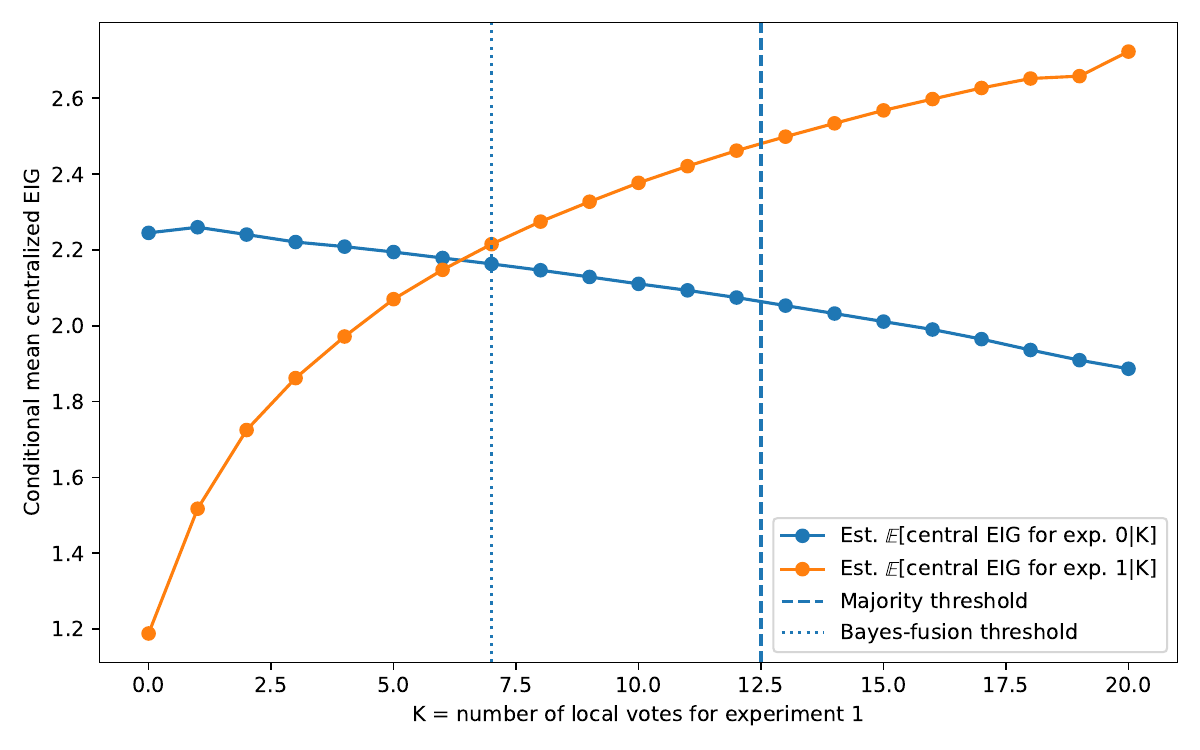}
\caption{Bayes fusion learns an information-gain threshold.}
\label{fig:scalar_info_gain}
\end{figure}
\begin{figure}[htbp!]
\centering
\includegraphics[height=1.75in,width=2.5in]{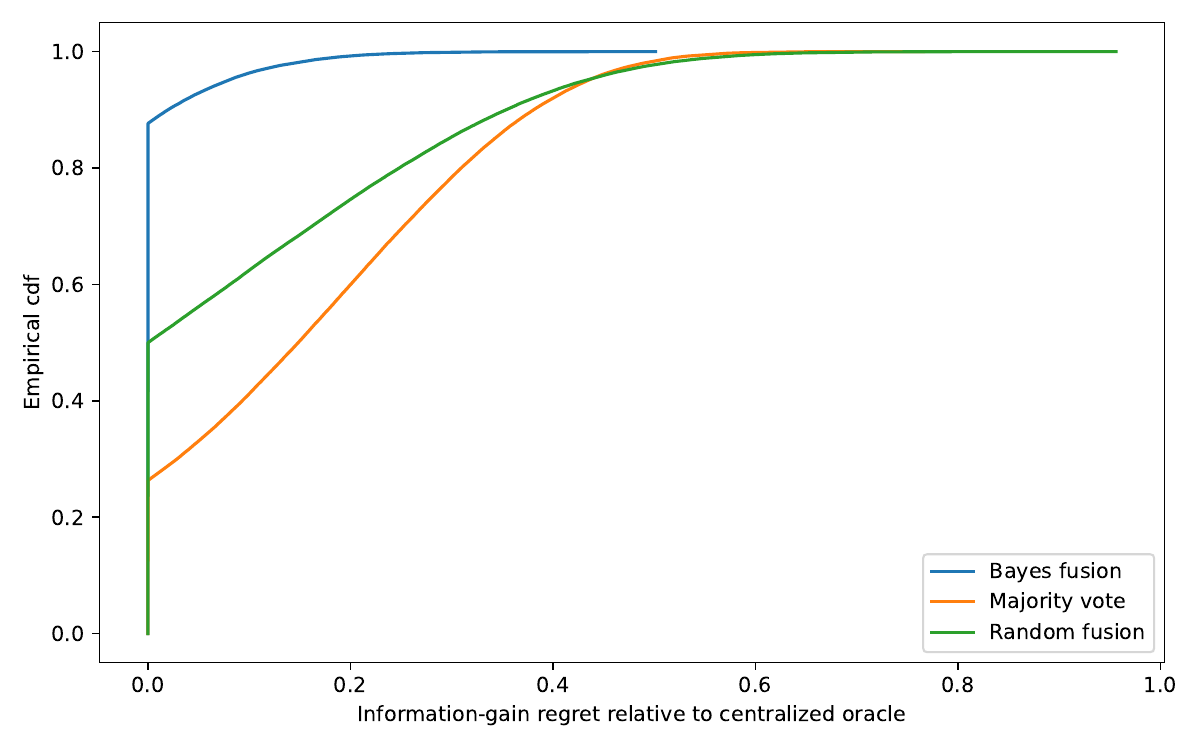}
\caption{Regret distributions.}
\label{fig:scalar_regret}
\end{figure}

The learned Bayes fusion rule is shown in \figref{fig:scalar_info_gain} and has a meaningful interpretation. The majority threshold for $M=25$ would require more than half of the sites to recommend experiment 1 before the fusion center selects experiment 1. The estimated Bayes rule switches to experiment 1 at a smaller value of $K$. This occurs because recommendations for experiment 1 are statistically associated with sites having larger precision contributions. Therefore, a minority of recommendations for experiment 1 can carry enough information to outweigh a majority of weaker recommendations for experiment 0. The fusion rule is not counting votes; it is estimating which experiment has larger centralized information gain conditional on the observed local decision pattern.
The regret distribution shown in \figref{fig:scalar_regret} reinforces this. The empirical distribution of information-gain regret for Bayes fusion is sharply concentrated near zero, indicating that the Bayes rule usually agrees with the centralized oracle or disagrees only in cases where the information loss is small. By contrast, the regret distribution for majority voting has a much heavier upper tail. This means that majority voting not only makes more wrong choices, but also makes wrong choices in cases where the centralized information-gain difference is substantial. In distributed experimental design, this distinction is important: the objective is not to maximize agreement with local decisions, but to preserve global information gain.
\begin{figure}[htbp!]
\centering
\includegraphics[height=1.75in,width=2.5in]{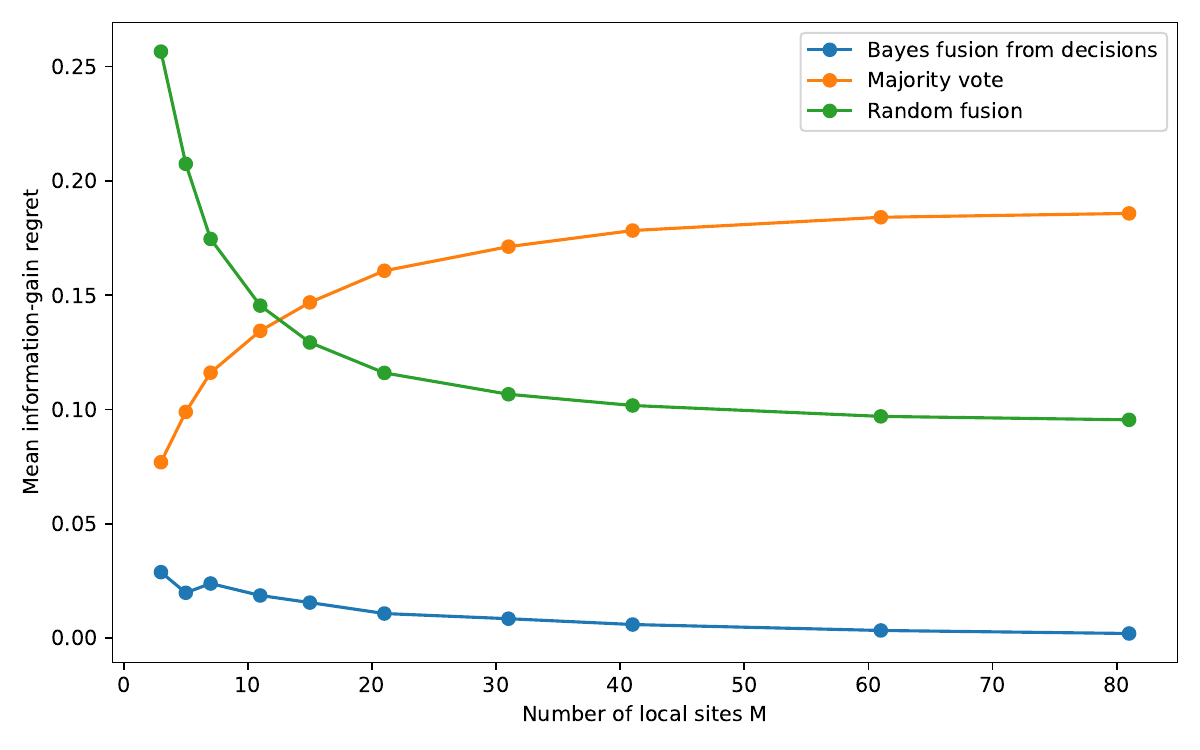}
\caption{Regret relative to centralized oracle.}
\label{fig:scalar_regret_relative}
\end{figure}
\begin{figure}[htbp!]
\centering
\includegraphics[height=1.75in,width=2.5in]{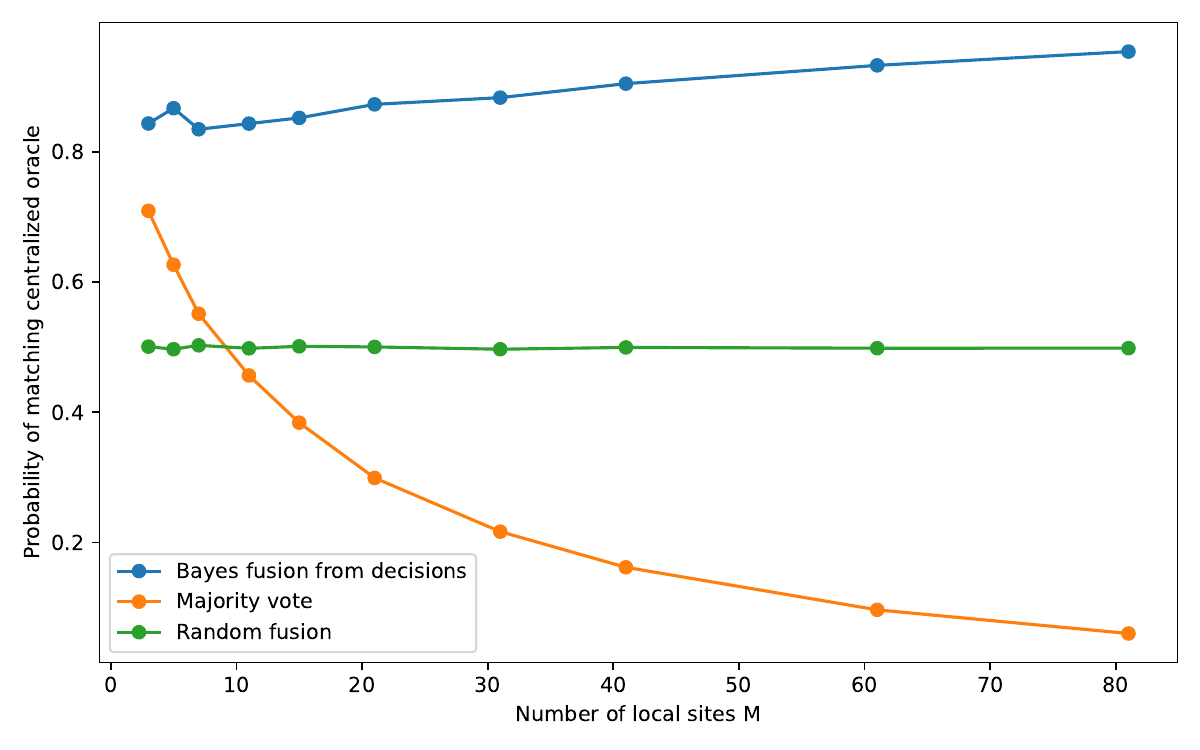}
\caption{Agreement with centralized design decision.}
\label{fig:scalar_agreement}
\end{figure}

The behavior as $M$ increases are shown in \figref{fig:scalar_regret_relative} and \figref{fig:scalar_agreement}, and further illustrates the theoretical point. The mean regret of Bayes fusion decreases toward zero as the number of sites grows, and its agreement with the centralized oracle increases. This provides numerical evidence for asymptotic equivalence between Bayes fusion based on local design decisions and the centralized design. Majority voting exhibits the opposite behavior in this heterogeneous setting. As the number of sites increases, majority voting becomes increasingly committed to the recommendation of the more numerous but less informative class of sites. Its probability of matching the centralized oracle decreases, and its information-gain regret remains substantial. Thus, increasing the number of local agents does not automatically improve a distributed design rule; improvement depends on whether the fusion mechanism is aligned with the centralized information criterion.

Next we consider the distributed binary-response experiment. In this model, the response at site $m$ under experiment $e$ is Bernoulli with success probability $p_{m,e}(\theta)=\{1+\exp[-(\alpha_m+x_{m,e}\theta)]\}^{-1}$. The local information contribution is computed as the prior expectation of the Bernoulli Fisher information, $J_m(e) = \mathbb{E}_\theta[x_{m, e}^2p_{m, e}(\theta)\{1 - p_{m, e}(\theta)\}]$, and the corresponding scalar information utility is taken to be $U_m(e) = 0.5\log\left\{1 + \tau^2J_m(e)\right\}$. The centralized utility replaces $J_m(e)$ by $\sum_{m=1}^M J_m(e)$. This construction preserves the binary-response nature of the data while yielding a tractable information criterion for comparing distributed fusion rules.  

For brevity, we do not display the analogous plots for the binary-response experiment. They show the same qualitative behavior as the scalar Gaussian plots: the Bayes fusion rule closely approximates the centralized oracle, whereas majority voting can be substantially suboptimal when a minority of sites carries greater information.  However, we present two tables quantifying the performances of the binary-response experiment under the finite sample $(M = 25)$ and asymptotic (increasing $M$) regimes. 

\begin{table*}[t!]
\centering
\caption{Monte Carlo performance in the binary-response experiment with $M=25$ local sites.}
\label{tab:binary_response_performance}
\begin{tabular}{|l|r|r|r|r|r|r|}
\hline
\specialcell{Rule} 
& \specialcell{Mean\\ utility} 
& \specialcell{Mean\\ regret} 
& \specialcell{Median\\ regret} 
& \specialcell{90\%\\ regret} 
& \specialcell{Oracle\\ agreement} 
& \specialcell{Select\\ exp. 1} \\
\hline
Centralized oracle 
& 1.8570 & 0.0000 & 0.0000 & 0.0000 & 1.0000 & 0.9282 \\ \hline
Bayes fusion 
& 1.8510 & 0.0060 & 0.0000 & 0.0000 & 0.9421 & 0.9683 \\ \hline
Majority vote 
& 1.5226 & 0.3344 & 0.3446 & 0.6193 & 0.1324 & 0.0606 \\ \hline
Random fusion 
& 1.6648 & 0.1922 & 0.0000 & 0.5734 & 0.5018 & 0.5027 \\
\hline
\end{tabular}
\end{table*}
\tabref{tab:binary_response_performance} reports the Monte Carlo performance of the competing fusion rules in the distributed binary-response experiment with $M=25$ local sites. The centralized oracle achieves a mean utility of $1.8570$ and serves as the benchmark. The proposed Bayes fusion rule attains a mean utility of $1.8510$, with mean regret only $0.0060$ relative to the oracle. It agrees with the centralized oracle in approximately $94.2\%$ of the simulated trials. The median and 90\% regret values are both zero, indicating that when Bayes fusion differs from the oracle, the resulting information loss is usually small.  The contrast with majority voting is substantial. Majority voting has mean utility $1.5226$ and mean regret $0.3344$, much worse than both Bayes fusion and random fusion. Its oracle agreement rate is only $13.2\%$. This poor performance occurs because the binary-response experiment is heterogeneous: the majority of local recommendations may come from less informative sites, while a minority of sites can carry greater information about the parameter. Hence the locally most common recommendation need not coincide with the globally most informative experiment. These results reinforce the main point of the paper: distributed Bayesian design should fuse local recommendations according to EIG, not by simple voting.

\begin{table*}[t!]
\centering
\caption{Performance of fusion rules in the binary-response experiment as the number of local sites $M$ increases.}
\label{tab:binary_response_asymptotic}
\begin{tabular}{|r|r|r|r|r|r|r|r|}
\hline
$M$
& \specialcell{Oracle\\ utility}
& \specialcell{Bayes\\ regret}
& \specialcell{Bayes\\ agree.}
& \specialcell{Majority\\ regret}
& \specialcell{Majority\\ agree.}
& \specialcell{Random\\ regret}
& \specialcell{Random\\ agree.} \\
\hline
 5 & 1.0849 & 0.0354 & 0.8475 & 0.1639 & 0.5461 & 0.2319 & 0.5016 \\ \hline
 9 & 1.3531 & 0.0198 & 0.8711 & 0.2270 & 0.3892 & 0.2071 & 0.5037 \\ \hline
13 & 1.5310 & 0.0134 & 0.8986 & 0.2683 & 0.2919 & 0.1947 & 0.5080 \\ \hline
17 & 1.6634 & 0.0106 & 0.9164 & 0.2985 & 0.2168 & 0.1942 & 0.5000 \\ \hline
21 & 1.7721 & 0.0079 & 0.9286 & 0.3219 & 0.1675 & 0.1935 & 0.5027 \\ \hline
25 & 1.8587 & 0.0059 & 0.9425 & 0.3373 & 0.1304 & 0.1939 & 0.5021 \\ \hline
31 & 1.9667 & 0.0042 & 0.9556 & 0.3535 & 0.0934 & 0.1938 & 0.5005 \\ \hline
41 & 2.1121 & 0.0019 & 0.9758 & 0.3754 & 0.0528 & 0.1971 & 0.5007 \\ \hline
51 & 2.2216 & 0.0010 & 0.9864 & 0.3860 & 0.0284 & 0.2000 & 0.4958 \\
\hline
\end{tabular}
\end{table*}
\tabref{tab:binary_response_asymptotic} reports the behavior of the competing fusion rules in the binary-response experiment as the number of local sites increases. The centralized oracle utility increases with $M$, as expected, because more local sites contribute information about the unknown parameter. The Bayes fusion rule tracks the centralized oracle increasingly closely. Its mean regret decreases from $0.0354$ at $M=5$ to $0.0010$ at $M=51$, while its oracle agreement increases from $0.8475$ to $0.9864$. This pattern is consistent with the asymptotic equivalence result: as the number of local sites grows, the local design messages contain enough information for the Bayes fusion rule to recover the centralized design with high probability.

Majority voting behaves quite differently. Its regret increases with $M$, from $0.1639$ at $M=5$ to $0.3860$ at $M=51$, while its agreement with the oracle decreases from $0.5461$ to $0.0284$. This occurs because the binary-response experiment is heterogeneous: a numerical majority of local recommendations may come from sites with lower information content, while a minority of sites can dominate the centralized information utility. Random fusion remains near $50\%$ agreement with the oracle, as expected, and has a roughly stable regret. Thus the table reinforces the main conclusion of the numerical study: local design recommendations should be fused through the information-gain criterion, not by simple voting.

Thus the numerical evidence supports the main theoretical findings.  Local design decisions can be highly informative compressed summaries, but they must be fused using the information-gain objective. Majority voting is appropriate only when local recommendations are exchangeable with respect to their contribution to centralized information gain. In heterogeneous systems, this exchangeability fails, and vote counts can be misleading. The Bayes fusion rule corrects this by learning how local decision patterns map to centralized EIG.

\section{Conclusion}\label{sec:conclusion}
This paper developed a framework for distributed Bayesian experimental design when local agents transmit only their preferred experiments to a fusion center. The main objective was to understand how such compressed local design decisions should be combined so that the selected global experiment remains close to the centralized expected-information-gain optimum.  We formulated the problem in terms of local EIG, centralized EIG, and information-gain regret. We derived the Bayes fusion rule, which selects the experiment with largest conditional expected centralized information gain given the local design messages. This rule differs from majority voting: the globally informative experiment need not be the one preferred by the largest number of local agents.  We also established information-loss bounds and showed that, under suitable separation and concentration conditions, the distributed fusion rule can become asymptotically equivalent to the centralized design. Numerical studies for scalar Gaussian and binary-response experiments supported the theory. In both settings, Bayes fusion closely approximated the centralized oracle, while majority voting could be substantially suboptimal under heterogeneous local informativeness.  

Future work includes extensions to multiple candidate experiments, richer local messages such as ranked recommendations or quantized information gains, and sequential distributed Bayesian design.  One can also weaken the requirement of common experiment choice at all sites, to enable a form of coded experimentation.  This is especially compelling when experiments are not conditionally independent.  This paper only considers the discrete case where an experiment design is to be chosen from a finite set of candidate designs. The more general continuous design case is open to investigation.  These directions would further clarify the trade-off between communication cost and loss of experimental information.


\balance
\bibliographystyle{IEEEtran}
\bibliography{/Users/kgnagananda/Documents/Work/collaborations/pdx/research/references/research_pdx.bib}

\vfill

\end{document}